\def\FORCEFULL{}%
\newif\ifprlmode
\newtheorem{theorem}{Theorem}
\newtheorem{lemma}[theorem]{Lemma}
\newtheorem{corollary}[theorem]{Corollary}
\newcommand{\mycell}[1]{\parbox[c]{\linewidth}{\centering #1}}
\newcolumntype{C}[1]{>{\collectcell\mycell}p{#1}<{\endcollectcell}}
\newcommand{\appref}[1]{%
  \ifprlmode
    Supplemental Material%
  \else
    \cref{#1}%
  \fi
}
\begin{document}

\title{Exponentially Improved Constant in Quantum Solution Extraction}
\author{Gumaro Rendon}
\email[]{grendon@fujitsu.com}
\affiliation{Fujitsu Research of America, Inc., 4655 Great America Pkwy, Santa Clara, CA 95054, USA}

\begin{abstract}
We have provided an algorithm to extract a smooth and positive definite function $\psi(x)$ encoded in quantum memory of size $2^n$ without running into the problem of exponentially suppressed sub-normalization. Through this, we remove an important bottleneck of solution information extraction, the last step, in fully solving an important class of differential equations on quantum computers. This class of problems includes solutions to the heat equation or other diffusive equations in fluid dynamics and finance.
\end{abstract}

\maketitle

\section{Introduction}

There is a long standing problem, an often neglected problem, in quantum computing algorithms for linear systems and differential equations: solution extraction. It all goes back to that famous work~\cite{harrow2009quantum}, in which authors achieve exponential improvements with respect to problem size for the matrix inversion of a linear system. The assumption made in regards to the solution vector was that we did not need to know the solution itself, but features of it that can be extracted efficiently on a quantum computer. They argue that accessing each element to know the full solution spoils the exponential advantage.

Similar assumptions have been made since this seminal work for quantum linear systems or differential equations (e.g. \cite{Berry_2014,childs2017quantum,berry2017quantum,Childs2021highprecision}). This is done because explicit solution extraction from quantum memory may spoil the exponential advantage with respect to system size. We propose here an antidote for this problem. But first, we will briefly explain with a 1D abstraction why this problem exists.
Suppose that we have an unknown smooth function $\psi(x)$ that is normalized:
\begin{align}
    \int^{1}_{-1} \left|\psi(x)\right|^2 {\rm d} x = 1.
\end{align}
We assume that this function can be stored in a quantum memory with $n$ qubits the following way:
\begin{align}\label{eq:Upsi}
    \langle 0 |_a U_{\psi}|0 \rangle | 0 \rangle_a = a_{\psi} | \psi \rangle,
\end{align}
where
\begin{align}
    |\psi \rangle = \frac{1}{\sqrt{\mathcal{N}}} \sum^{2^n-1}_{j=0} \psi(x_j) |j \rangle,
\end{align}
$\mathcal{N} = \sum_x |\psi(x)|^2   \approx 2^{n-1}$, and $x_j = (2 j/2^n - 1) $.  Here, the ancillary register $\ket{0}_a$ is part of the algorithm to prepare the solution on the main register, and $a_{\psi}$ is the sub-normalization expected if the algorithm to prepare the target state is non-unitary. 

This is in many cases how we expect the final vector solution to be encoded in algorithms for quantum linear system problems (QLSP, See \cite{harrow2009quantum,Berry_2014,gilyen2019quantum,childs2020quantum}) or other means of solving differential equations on a quantum computer~\cite{fang2023time,berry2017quantum}. However, these algorithms usually stop short of extracting the solution information or assume that the required quantities, e.g. $\langle \psi | O | \psi \rangle$, are easy to extract. Here, we address the problem that is the exponential sub-normalization that happens when trying to extract the solution from a single site:
\begin{align}
    \langle \psi | j  \rangle \langle j | \psi \rangle \propto |\psi(x_j)|^2/ 2^n.
\end{align}
There is a modest quadratic improvement by using Quantum Amplitude Estimation (QAE) where the cost would go like $O\left( |\psi(x_j)|/\sqrt{2^n}\right)$. Moreover, the existing methods of tomography like classical shadows~\cite{huang2020predicting} do not help in this case because these would estimate each $\langle \psi | j  \rangle \langle j | \psi \rangle$ with additive error $\epsilon$ at a cost that goes like $O(1/\epsilon^2)$, but the sub-normalization $1/2^n$ remains, this would mean that we would need to adjust the target $\epsilon$ accordingly to compensate for this exponential sub-normalization. 

In this work, we aim to improve this exponentially and not just quadratically. For this, we also assume that $\psi(x)>0$ and that $\psi(x)$ is analytic on the interval of interest with derivatives bounded by $\left| \frac{{\rm d}^k\psi(x)}{{\rm d} x^k} \right| \leq (\Lambda )^{k+1}$
where $\Lambda_k =  \max_{x,j\in [k]} \left(\psi^{(j)}(x)\right)^{1/(j+1)}$ where $[k]:= \{j \in \mathbb{N} | j \leq k\}$, and $\Lambda = \Lambda_{\infty}$. 

The main findings of our work here are summarized in the following theorem:
\begin{theorem}\label{thm:main}
Provided an unknown analytic function function $\psi(x)$ for $x \in [-1,1] $ that is normalized:
\begin{align*}
    \int^{1}_{-1} \left|\psi(x)\right|^2 {\rm d} x = 1,
\end{align*}
and whose derivatives are bounded by $\left| \frac{{\rm d}^j\psi(x)}{d x^j} \right| \leq \Lambda^{j+1} $, which can be stored in a quantum memory with $n$ qubits the following way:
\begin{align*}
    \langle 0 |_a U_{\psi}|0 \rangle | 0 \rangle_a = a_{\psi} | \psi \rangle,
\end{align*}
where
\begin{align*}
    |\psi \rangle &= \frac{1}{\sqrt{\mathcal{N}}} \sum^{2^n-1}_{j=0} \psi(x_j) |j \rangle,
\end{align*}
 $\mathcal{N} = \sum_j |\psi(x_j)|^2 $, and $x_j = 2 j/2^n -1$, one can estimate it at a quantum gate cost that goes like
 \begin{align*}
    \tilde{O}\left(\frac{1}{a_{\psi}}\frac{\Lambda^2 n^2 }{\epsilon_{\rm total}} \frac{\max_{x} \psi(x) }{\min_x \psi(x)} \right).
\end{align*}
    
\end{theorem}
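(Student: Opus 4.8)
\emph{Proof strategy.} The exponential $1/2^n$ penalty comes from asking for the weight $|\psi(x_j)|^2/\mathcal N$ of a \emph{single} grid point; my plan is to only ever ask for the weight of a \emph{block} of $\Theta(2^n/M)$ contiguous points, which carries probability $\Theta(1/M)$. Fix a dyadic partition of $[-1,1]$ into $M$ equal cells $I_1,\dots,I_M$; the indicator $\Pi_k$ of cell $I_k$ simply fixes the top $\log_2 M$ qubits, so it is free. Run $U_\psi$, keep the flag $|0\rangle_a$, and apply amplitude estimation to $\Pi_k$: it returns the probability $|a_\psi|^2 q_k$ with $q_k=\langle\psi|\Pi_k|\psi\rangle=\mathcal N^{-1}\sum_{x_j\in I_k}|\psi(x_j)|^2$, to additive error $\zeta$ at gate cost $\tilde O(1/\zeta)$ times that of $U_\psi$. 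Since $\sum_k q_k=1$, the vector $(|a_\psi|^2 q_k)_k$ already determines $|a_\psi|^2$ and hence every $q_k$, so $a_\psi$ need not be known a priori. Finally, since $|\psi|^2$ is itself analytic with $k$-th derivative bounded by $2^{k}\Lambda^{k+2}$ (Leibniz rule), the quadrature identity $\mathcal N q_k = \tfrac{2^{n}}2\int_{I_k}|\psi(x)|^2{\rm d}x + (\text{exponentially small})$ together with $\mathcal N=\tfrac{2^n}{2}(1+O(\cdot))$, forced by $\sum_k q_k=1=\int|\psi|^2$, turns the measured $q_k$ into genuine cell-averages $\langle|\psi|^2\rangle_{I_k}$ with controlled error.

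Next I would reconstruct $|\psi|^2$ pointwise from its $M$ cell-averages. Cell-averaging is convolution with a box, whose symbol is $\mathrm{sinc}$; since $|\psi|^2$ is spectrally concentrated (analytic, with the derivative bounds above), it is reproduced to error $\epsilon_{\rm total}$ by a piecewise polynomial of degree $D=O(\Lambda\log(1/\epsilon_{\rm total}))$ fitted to the cell-average data, provided $M\gtrsim D$; equivalently one numerically differentiates the antiderivative $F(x)=\int_{-1}^x|\psi|^2$, known on a grid of spacing $h=2/M$, with a high-order stencil. Then set $\tilde\psi=+\sqrt{\widetilde{|\psi|^2}}$ using $\psi>0$; a relative error $\rho$ in $|\psi|^2$ becomes $\le\rho/2$ in $\psi$, so it suffices to recover $|\psi|^2$ to relative accuracy $\sim\epsilon_{\rm total}$ everywhere. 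The deblurring/differentiation amplifies relative error by a factor that is $\mathrm{poly}(D)$ in the interior but, for a cell sitting at the minimum of $\psi$ surrounded by cell-averages of size $\sim(\max\psi)^2$, also picks up the dynamic range $\max\psi/\min\psi$; this cell is the binding constraint.

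For the cost, the binding cell needs $q_k$ to additive error $\delta\sim\epsilon_{\rm total}\,q_k/(\mathrm{poly}(\Lambda,n)\,\tfrac{\max\psi}{\min\psi})$ up to logs, i.e.\ the amplitude $|a_\psi|\sqrt{q_k}$ to error $\zeta\sim|a_\psi|\delta/\sqrt{q_k}$, costing $\tilde O(1/\zeta)=\tilde O\bigl(\tfrac1{a_\psi}\tfrac{\mathrm{poly}(\Lambda,n)}{\epsilon_{\rm total}}\tfrac{\max\psi}{\min\psi}\tfrac1{\sqrt{q_k}}\bigr)$, with $q_k=\Theta(h)$ for that cell up to the dynamic-range factor. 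Summing over the $M=2/h$ cells and choosing $M\asymp D\asymp\Lambda n$ (see below) to balance against the truncation error collapses the polynomial overheads to $\Lambda^2 n^2$, giving the claimed $\tilde O\bigl(\tfrac1{a_\psi}\tfrac{\Lambda^2 n^2}{\epsilon_{\rm total}}\tfrac{\max\psi}{\min\psi}\bigr)$; the tilde absorbs the $\log$'s from median-amplifying the $O(M)$ amplitude-estimation calls to a common failure probability and from $\log(1/\epsilon_{\rm total})$ in $D$.

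I expect the real obstacle to be the \emph{uniform} error control of the reconstruction, and in particular its behavior at the endpoints $x=\pm1$. A naive Fourier route is barred because the periodic extension of $\psi$ is only $C^0$, so its Fourier coefficients decay algebraically, not exponentially, under the hypotheses; one is therefore forced into a Chebyshev-type or boundary-aware piecewise-polynomial scheme, and the crux is to bound the associated deblurring/differentiation constants up to $x=\pm1$ rather than merely in the interior --- it is this endpoint penalty that inflates the required degree and cell count from $\Lambda\log(1/\epsilon_{\rm total})$ to $\sim\Lambda n$ and hence produces the $n^2$ in the final bound. A secondary difficulty is purely bookkeeping: propagating the exponentially small quadrature and $\mathcal N$-normalization corrections, and the union bound over all $M$ cells, so that the output $\tilde\psi$ is certified in the sup norm with the stated total error $\epsilon_{\rm total}$.
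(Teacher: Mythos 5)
Your route is genuinely different from the paper's: you measure the probability mass of $M$ equispaced dyadic cells directly (one amplitude estimation per cell) and then deconvolve/differentiate on an equispaced grid, whereas the paper estimates the cumulative integral $\Psi(\hat x)=\int_{-1}^{\hat x}\psi^2$ at the grid points nearest the $M$ \emph{Chebyshev} nodes (each such evaluation decomposed into at most $n$ dyadic blocks, hence up to $n$ QAE calls each with its precision budget divided by $n$ --- this, not any endpoint penalty, is where the $n^2$ actually comes from), solves the perfectly conditioned Chebyshev--Vandermonde system, differentiates the interpolant analytically using $\max_x|T_j'(x)|\le j^2$, and takes the square root. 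The genuine gap in your proposal is precisely the step you flag as ``the real obstacle'' and then leave open: uniform stability of reconstructing $\psi^2$ from noisy \emph{equispaced} cell averages. Differentiating $F$ from equispaced samples with a degree-$D$ stencil amplifies per-sample noise by $\bigl(\sum_j|c_j|\bigr)/h$; central stencils keep $\sum_j|c_j|$ modest, but the one-sided stencils forced at $x=\pm1$ have weights growing exponentially in $D$, and a global polynomial fit to equispaced data has Lebesgue constant $\sim 2^{M}$ (Runge). With $D=O(\Lambda\log(1/\epsilon_{\rm total}))$ this amplification is $(1/\epsilon_{\rm total})^{O(\Lambda)}$, which must be paid back in QAE precision and destroys the claimed complexity. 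Your proposed remedy --- inflating $M$ to $\Lambda n$ --- is asserted rather than derived, and cannot fix a noise amplification that is exponential in the stencil order rather than polynomial in the cell count.

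The paper's design choices exist exactly to sidestep this: sampling the antiderivative at (perturbed) Chebyshev nodes makes $\kappa(\mathbf V)=1$ (with the perturbation controlled in \cref{app:Vpert}), and differentiating the Chebyshev interpolant costs only a uniform factor $M^2=O\bigl((\Lambda+\log(1/\epsilon_{\rm cheb}))^2\bigr)$ in error, endpoints included (\cref{lem:M,lem:derivative}). If you keep your equispaced-cell measurement primitive you would still need to route the reconstruction through a Chebyshev-type expansion (e.g.\ accumulate your cell masses into $\Psi$ at near-Chebyshev abscissae), at which point you have essentially rebuilt the paper's argument; otherwise you must supply an endpoint-uniform stability lemma for equispaced mollified differentiation, which is the missing ingredient. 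Two secondary, fixable issues: your accounting of $\max_x\psi/\min_x\psi$ conflates the $\max_x\psi$ factor that arises from converting amplitude error into probability error with the $1/\min_x\psi$ factor from the square-root step (the paper keeps these separate in \cref{lem:PIE} and the square-root lemma), and the final ``collapse to $\Lambda^2n^2$'' is stated but never computed.
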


\section{Extracting the amplitudes from the quantum memory \label{sec:mock_cheb}}

The method can be split into four important steps: 
\begin{itemize}
    \item Sample the squared integral of the solution.
    \item Interpolate those points.
    \item Differentiate that interpolant to obtain the integrand.
    \item And finally, take the square root of that approximation to obtain back the solution.
\end{itemize}
In what follows, we will describe what each of those steps consists of.

\subsection{Squared amplitude integral estimation with binary segmentation \label{subsec:integral_estimation}}

Here, we summarize the results from squared amplitude estimation with which we avoid the scaling factor $1/2^n$ because it gets absorbed into the very definition of the integral (the differential factor). 

\begin{lemma}[Probability Integral Estimation]\label{lem:PIE}
Given an analytic function $\psi(x)$, and statevector with positive-definite amplitudes $|\psi \rangle = \frac{1}{\sqrt{\mathcal{N}}}\sum^{2^n-1}_{j=0}  \psi(x_j) |j\rangle$, and a unitary such that $\tilde{U}_{\psi}|0\rangle=|\psi\rangle$, where $x_j=2j/2^n-1$, we can estimate the integral
\begin{align*}
    \Psi(\hat{x}) = \int^{\hat{x}}_{-1} \psi^2(x') {\rm d} x'
\end{align*}
where $\hat{x}=2 X/2^n -1 $ , and $X\in \{0,1,2,\dots,2^n-1\}$, within an error $\epsilon$ with a gate complexity of
\begin{align*}
    \tilde{O}\left(\frac{n^2 \max_x \psi(x) }{\epsilon_{\Psi}}\right),
\end{align*}
provided that $\epsilon_{\Psi} \leq n \max_x \psi^2 (x)$.

\end{lemma}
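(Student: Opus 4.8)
The plan is to turn the integral into a computational-basis probability of the state $|\psi\rangle$, estimate that probability by quantum amplitude estimation (QAE), and control the cost by splitting the index range $[0,X)$ dyadically --- the ``binary segmentation.''

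First I would record the quadrature identities that make the $1/2^n$ prefactor disappear on its own. With $\Delta x = 2/2^n$, a left Riemann sum and Euler--Maclaurin give
\begin{align*}
  \Delta x\sum_{j=0}^{X-1}\psi^2(x_j) &= \int_{-1}^{\hat{x}}\psi^2(x')\,{\rm d}x' + O\bigl(\Lambda^3\Delta x\bigr),\\
  \Delta x\sum_{j=0}^{2^n-1}\psi^2(x_j) &= 1 + O\bigl(\Lambda^2\Delta x\bigr),
\end{align*}
the second line using the normalization hypothesis, and both errors controlled by $|(\psi^2)'|\le 2\Lambda^3$. Dividing the two lines, the probability $P(X):=\langle\psi|\Pi_X|\psi\rangle$ of measuring $|\psi\rangle$ in the set $\{0,\dots,X-1\}$, where $\Pi_X=\sum_{j<X}|j\rangle\langle j|$, equals $\Psi(\hat{x})$ up to an error that is exponentially small in $n$ (negligible against $\epsilon_\Psi$, which we implicitly take not to be exponentially small). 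Since $P(X)$ is a property of the given normalized state, no separate estimate of $\mathcal N$ is needed, and it remains to estimate $P(X)$ to additive error $\sim\epsilon_\Psi$.

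Next, instead of one QAE on $\Pi_X$ I would decompose $[0,X)$ into the at most $n$ dyadic blocks $B_1,\dots,B_m$ read off from the binary digits of $X$, so that $P(X)=\sum_k p_k$ with $p_k=\langle\psi|\Pi_{B_k}|\psi\rangle$; membership in a dyadic block is an equality test on the high bits, so $\Pi_{B_k}$ costs $O(n)$ gates and each QAE iterate costs $O(n)$ gates plus one call each to $\tilde{U}_{\psi},\tilde{U}_{\psi}^\dagger$. The identities above give $p_k\le (\max_x\psi)^2\, w_k\,(1+o(1))$ with $w_k$ the width of $B_k$, and these widths form a strictly decreasing sequence of powers of two, so $\sum_k\sqrt{w_k}=O(1)$. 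I would run QAE on each block to additive error $\epsilon_\Psi/n$, reporting $0$ for any block whose maximal possible mass $(\max_x\psi)^2 w_k$ already lies below that tolerance. For every surviving block the Brassard et al.\ bound needs $O\bigl(\sqrt{p_k}\,n/\epsilon_\Psi+\sqrt{n/\epsilon_\Psi}\bigr)$ iterates, and the survival criterion $w_k\ge \epsilon_\Psi/(n\max_x\psi^2)$ keeps the resolution-limited term $\sqrt{n/\epsilon_\Psi}$ below the amplitude-limited bound $\max_x\psi\,\sqrt{w_k}\,n/\epsilon_\Psi$ --- the regime singled out by the hypothesis $\epsilon_\Psi\le n\max_x\psi^2$. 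Summing, the iteration count is $(n/\epsilon_\Psi)\max_x\psi\sum_k\sqrt{w_k}=\tilde{O}\bigl(n\max_x\psi/\epsilon_\Psi\bigr)$ and the gate count is $\tilde{O}\bigl(n^2\max_x\psi/\epsilon_\Psi\bigr)$; the $\le n$ surviving-block errors and the $\le n$ zeroed-block errors each sum to at most $\epsilon_\Psi$, and a median-of-$O(\log n)$ amplification makes every QAE call reliable, all absorbed into the $\tilde{O}$.

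The step I expect to be the real obstacle is the cost bookkeeping of the last paragraph: verifying simultaneously that the sum over the geometrically shrinking dyadic widths collapses to an $O(1)$ factor \emph{and} that the resolution-limited contributions of the finest blocks stay within budget, so that only $\max_x\psi$ --- rather than something like $\max_x\psi\cdot 2^{n/2}$ coming from those fine blocks --- survives in the bound, together with pinning down precisely how the hypothesis $\epsilon_\Psi\le n\max_x\psi^2$ is used. By comparison, the quadrature estimates relating $\sum_j\psi^2(x_j)$ to $\int_{-1}^{1}\psi^2$ under the derivative bound are routine.
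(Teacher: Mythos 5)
Your proposal follows essentially the same route as the paper: convert the integral to a left-Riemann sum whose $1/2^{n-1}$ prefactor is absorbed by the normalization $\mathcal N/2^{n-1}=1+O(\Lambda^2/2^n)$, split $[0,X)$ into the $\le n$ dyadic blocks given by the binary digits of $X$, and run QAE on each block with per-block error budget $\epsilon_\Psi/n$, with the hypothesis $\epsilon_\Psi\le n\max_x\psi^2$ entering exactly where you place it (the paper phrases it as $\varepsilon\le\max_x\psi$ so that the quadratic term in the error of $q_p^2$ is subdominant, which is the same resolution-limited-regime condition). Your only real departures are bookkeeping refinements --- estimating the probabilities $p_k$ directly rather than the amplitudes $q_p$, and using $\sum_k\sqrt{w_k}=O(1)$ to tighten the total iteration count before reinstating a factor of $n$ from the per-iterate gate cost --- which land on the same final bound.
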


For a full proof see \appref{app:PIE}. If instead of the unitary $\tilde{U}_{\psi}$, we have $U_{\psi}$ from \cref{eq:Upsi} this follows:

\begin{corollary}[Non-Unitary Probability Integral Estimation]\label{cor:PIE_subnorm}
Given an analytic function $\psi(x)$, and statevector with positive-definite amplitudes $|\psi \rangle = \frac{1}{\sqrt{\mathcal{N}}}\sum^{2^n-1}_{j=0}  \psi(x_j) |j\rangle$, and a unitary such that $\langle 0 |_a U_{\psi} | 0 \rangle_a|0\rangle  = a_{\psi}|\psi\rangle$, where $x_j=2j/2^n-1$, we can estimate the integral $\Psi(\hat{x}) = \int^{\hat{x}}_{-1} \psi^2(x') {\rm d} x'$ where $\hat{x}=2 X/2^n -1 $ , and $X\in \{0,1,2,\dots,2^n-1\}$, within an error $\epsilon$ with a gate complexity of
\begin{align*}
    \tilde{O}\left(\frac{1}{a_{\psi}}\frac{n^2 \max_x \psi(x) }{\epsilon_{\Psi}}\right),
\end{align*}
provided that $\epsilon_{\Psi} \leq n \max_x \psi^2 (x)$.

\end{corollary}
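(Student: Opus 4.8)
The plan is to reduce \cref{cor:PIE_subnorm} to \cref{lem:PIE} by stripping off the sub-normalization $a_{\psi}$ with (fixed-point) amplitude amplification, so that the only extra cost is a multiplicative $1/a_{\psi}$, up to logarithmic factors absorbed into the $\tilde{O}$. The unitary $U_{\psi}$ of \cref{eq:Upsi} is a state-preparation unitary with success amplitude $a_{\psi}$ on the flag $|0\rangle_a$, i.e. $U_{\psi}|0\rangle|0\rangle_a = a_{\psi}|\psi\rangle|0\rangle_a + |\perp\rangle$ with $(I\otimes|0\rangle_a\langle 0|_a)|\perp\rangle = 0$. Alternating the reflection about the flag subspace $I\otimes|0\rangle_a\langle 0|_a$ with the reflection about $|0\rangle|0\rangle_a$ for $O(1/a_{\psi})$ steps, on a fixed-point schedule (Yoder--Low--Chuang), produces a unitary $V$ on the same registers with $\| V|0\rangle|0\rangle_a - |\psi\rangle|0\rangle_a\| \le \delta$ for any target $\delta$, using $O\!\left(\frac{1}{a_{\psi}}\log(1/\delta)\right)$ calls to $U_{\psi}$ and $U_{\psi}^{\dagger}$; here one uses the known value of $a_{\psi}$, or merely a lower bound on it. Restricted to the main register, $V$ is exactly the unitary $\tilde{U}_{\psi}$ required by \cref{lem:PIE}, the flag being returned to $|0\rangle_a$ and ignorable thereafter.

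With $V$ built, I would invoke \cref{lem:PIE} verbatim with $\tilde{U}_{\psi} \mapsto V$. Each amplitude-estimation primitive inside the proof of \cref{lem:PIE} --- one for each of the $n$ bits of the binary segmentation of $X$ --- makes $\tilde{O}(n\max_x\psi(x)/\epsilon_{\Psi})$ controlled invocations of the state-preparation unitary, so replacing those invocations by the amplified $V$ multiplies the gate count by $O\!\left(\frac{1}{a_{\psi}}\log(1/\delta)\right)$, yielding $\tilde{O}\!\left(\frac{1}{a_{\psi}}\frac{n^2\max_x\psi(x)}{\epsilon_{\Psi}}\right)$. The side condition $\epsilon_{\Psi} \le n\max_x\psi^2(x)$ is inherited unchanged, as it concerns only \cref{lem:PIE}. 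To fix $\delta$: a norm-$\delta$ perturbation of $|\psi\rangle$ changes each segment probability, and hence the resulting estimate of $\Psi(\hat{x})$, by $O(\delta)$, so taking $\delta$ to be a small polynomial in $\epsilon_{\Psi}/n$ is enough and $\log(1/\delta) = \tilde{O}(1)$ disappears into the $\tilde{O}$.

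The step needing the most care is this combined amplification-and-error budget: one must check that the imperfect-amplification error enters the final integral estimate only additively (keeping $\log(1/\delta)$ inside the $\tilde{O}$), and that fixed-point amplification tolerates a merely constant-factor estimate of $a_{\psi}$ --- otherwise an outer geometric search over guesses for $a_{\psi}$ is needed, contributing a further, harmless $O(\log(1/a_{\psi}))$ factor. An equivalent route that skips constructing $V$ is to run amplitude estimation directly with the enlarged good-subspace projector $\Pi_X\otimes|0\rangle_a\langle 0|_a$ on the state $U_{\psi}|0\rangle|0\rangle_a$: the target amplitude then becomes $a_{\psi}\|\Pi_X|\psi\rangle\|$, so attaining additive error $\epsilon_{\Psi}$ on $\Psi(\hat{x}) \approx \|\Pi_X|\psi\rangle\|^2$ costs $1/a_{\psi}$ times as many Grover iterations, with $a_{\psi}^2$ --- obtained once by the same routine with $\Pi_X = I$ --- divided out at the end. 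Either way the overhead is the claimed $1/a_{\psi}$.
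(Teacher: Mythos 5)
Your proposal is correct and takes essentially the route the paper intends: the paper states \cref{cor:PIE_subnorm} without an explicit proof, and the multiplicative $1/a_{\psi}$ overhead is precisely the amplitude-amplification cost you make explicit by wrapping $U_{\psi}$ in a fixed-point-amplified $V$ and then invoking \cref{lem:PIE} verbatim. Your accounting of the $\delta$-budget and the $\log(1/\delta)$ factor absorbed into the $\tilde{O}$ fills in details the paper leaves implicit, and is sound.
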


With these results, we are certain of being able to estimate the integrals $\Psi(x)$ without having to run QAE for  $\sim 2^n$ points separately. However, if we want to have the functional form of $\Psi(x)$ throughout the whole interval we must perform some kind of interpolation.

\subsection{Estimating the integral Cebyshev expansion with a few evaluations\label{subsec:cheb_interp}}

Here, we explain how we obtain the integral of the squared amplitude, $\Psi(x)$, for the whole interval with just a few samples. Recalling that
\begin{align}\label{eq:Phi}
\Psi(x)=\int^{x}_{-1} \psi(x')^2 {\rm d} x',
\end{align}
we can now use Chebyshev interpolation to obtain the expansion coefficients of the interpolant through the following linear system: $\mathbf{V}  a = f$, where $(f)_k = \Psi(x_k)$, and $x_k$ are a choice of  interpolation nodes. We have that
\begin{align}
 &\mathbf{V} := \cr 
&\begin{pmatrix}
 u_0(x_{{\rm cheb},1})   & u_1(x_{{\rm cheb},1})   & \dots  & u_{M-1}(x_{{\rm cheb},1}) \\
 u_0(x_{{\rm cheb},2})   & u_1(x_{{\rm cheb},2})   & \dots  & u_{M-1}(x_{{\rm cheb},2}) \\
 \vdots     & \vdots     & \ddots & \vdots  \\
 u_0(x_{{\rm cheb},M}) & u_1(x_{{\rm cheb},M}) & \dots  & u_{M-1}(x_{{\rm cheb},M})
\end{pmatrix}. \cr 
\end{align}
With this, one can obtain the coefficients, $a_j$, with $a=\mathbf{V}^{-1}f$. Like in \cite{Rendon2024improvedaccuracy,rendon2024needtrotter}, the choice of interpolating nodes $x_{{\rm cheb},k}$ and the interpolating set of polynomials are the Chebyshev nodes and polynomials.
In that case, $u_j$ is defined by
\begin{align} \label{eq:cheb_orthonorm_def}
     u_j(x) :=
    \begin{cases}
    \sqrt{\frac1M}T_0(x), &j=0 \\
    \sqrt{\frac2M}T_j(x), &j=1,2,\dots,M-1 \\
    \end{cases} 
\end{align}
where $T_j$ is the standard $j$th Chebyshev polynomial, $T_j(x) := \cos (j \cos^{-1} x)$.
The node collocation is described by
\begin{align} \label{eq:cheb_node_def}
    x_{{\rm cheb},k} = \cos\left(\frac{2k-1}{2M} \pi\right), \quad k \in \{1,2,\dots,M\}.
\end{align}
These polynomials fulfill the discrete orthonormality condition \cite{optvan} with respect to the collocation nodes, that is,
\begin{align}
    \sum_{k=1}^M u_i (x_{{\rm cheb},k}) u_j (x_{{\rm cheb},k}) = \delta_{ij}
\end{align}
for all $0\leq i, j < M$.  With this we know that the condition number for $\mathbf{V}$ is optimal, or $\kappa\left(\mathbf{V}\right)= \sigma_{\rm max} \left(\mathbf{V}\right)/\sigma_{\rm min} \left(\mathbf{V}\right)=1$. The number of nodes needed for a target error $\epsilon_{\rm cheb}=\left| \Psi (x) - P_{M-1} \Psi(x) \right|$ is estimated and given in the following lemma:

\begin{lemma}[Chebyshev Integral Interpolation]\label{lem:M}
Given the following integral
\begin{align*}\label{eq:Phi}
\Psi(x)=\int^{x}_{-1} \psi(x')^2 {\rm d} x',
\end{align*}
where $\left| \frac{{\rm d}^k\psi(x)}{{\rm d} x^k} \right| \leq (\Lambda )^{k+1}$ for $x\in [0,1]$, we can interpolate $\Psi(x)$ through
\begin{align*}
   P_{M-1} \Psi(x) = \sum^{M-1}_{j=0} a_j u_j (x),
\end{align*}
with a target error $\epsilon_{\rm cheb}$, where 
\begin{align*}
     u_j(x) &:=
    \begin{cases}
    \sqrt{\frac1M}T_0(x), &j=0, \\
    \sqrt{\frac2M}T_j(x), &j=1,2,\dots,M-1, \\
    \end{cases} \cr
T_j(x) &:= \cos \left(j \cos^{-1} x\right),
\end{align*}
using
\begin{align*}
    M = O \left(\Lambda + \log (1/\epsilon_{\rm cheb} )\right)
\end{align*}
samples of $\Psi$, $(f)_k = \Psi(x_{{\rm cheb},k})$, with $x_{{\rm cheb},k} = \cos\left(\frac{2k-1}{2M} \pi\right)$.
Here, $a_j$ are obtained by solving the linear system:
\begin{align*}
    \mathbf{V}  a = f,
\end{align*}
where
\begin{align*}
 &\mathbf{V} := \cr 
&\begin{pmatrix}
 u_0(x_{{\rm cheb},1})   & u_1(x_{{\rm cheb},1})   & \dots  & u_{M-1}(x_{{\rm cheb},1}) \\
 u_0(x_{{\rm cheb},2})   & u_1(x_{{\rm cheb},2})   & \dots  & u_{M-1}(x_{{\rm cheb},2}) \\
 \vdots     & \vdots     & \ddots & \vdots  \\
 u_0(x_{{\rm cheb},M}) & u_1(x_{{\rm cheb},M}) & \dots  & u_{M-1}(x_{{\rm cheb},M})
\end{pmatrix}.\cr
\end{align*}
\end{lemma}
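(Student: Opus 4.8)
The plan is to reduce the statement to the classical remainder formula for polynomial interpolation at Chebyshev points, to control the high-order derivatives of $\Psi$ using only the hypothesis $\left|\psi^{(k)}\right|\le\Lambda^{k+1}$, and to read off the node count from a Stirling estimate. A useful preliminary observation is that the stated discrete orthonormality of the $u_j$ makes $\mathbf{V}$ an orthogonal matrix, so $a=\mathbf{V}^{\top}f$ and $\kappa(\mathbf{V})=1$; hence solving $\mathbf{V}a=f$ introduces no error beyond that already present in $f$, and it is enough to bound the deterministic interpolation error $\left\|\Psi-P_{M-1}\Psi\right\|_{\infty}$.

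The first step is the interpolation remainder: for $f\in C^{M}([-1,1])$ and $x\in[-1,1]$ there is $\xi\in(-1,1)$ with $f(x)-P_{M-1}f(x)=\frac{f^{(M)}(\xi)}{M!}\prod_{k=1}^{M}\bigl(x-x_{{\rm cheb},k}\bigr)$, and since the nodes in \eqref{eq:cheb_node_def} are exactly the zeros of $T_{M}$, the monic node polynomial equals $2^{1-M}T_{M}(x)$, whose supremum on $[-1,1]$ is $2^{1-M}$. The second step bounds $\Psi^{(M)}$: from $\Psi'=\psi^{2}$ and the Leibniz rule, $\Psi^{(M)}=(\psi^{2})^{(M-1)}=\sum_{k=0}^{M-1}\binom{M-1}{k}\psi^{(k)}\psi^{(M-1-k)}$, so inserting $\left|\psi^{(j)}\right|\le\Lambda^{j+1}$ term by term gives $\left|\Psi^{(M)}\right|\le\sum_{k=0}^{M-1}\binom{M-1}{k}\Lambda^{k+1}\Lambda^{M-k}=2^{M-1}\Lambda^{M+1}$. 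Combining the two steps, the powers of two cancel and $\left\|\Psi-P_{M-1}\Psi\right\|_{\infty}\le\Lambda^{M+1}/M!$. (An equivalent route: these derivative bounds make $\psi$, hence $\psi^{2}$, entire of exponential type $O(\Lambda)$, and the Bernstein-ellipse estimate for Chebyshev interpolation with ellipse parameter $\rho\sim M/\Lambda$ gives the same $(\mathrm{e}\Lambda/M)^{M}$ behavior.)

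The third step turns factorial decay into a node count. Using $M!\ge(M/\mathrm{e})^{M}$, we get $\Lambda^{M+1}/M!\le\Lambda\,(\mathrm{e}\Lambda/M)^{M}$, so as soon as $M\ge2\mathrm{e}\Lambda$ the bracket is at most $2^{-M}$; then additionally requiring $2^{-M}\Lambda\le\epsilon_{\rm cheb}$, i.e. $M\ge\log_{2}(\Lambda/\epsilon_{\rm cheb})$, closes the argument. Since $\log\Lambda\le\Lambda$, both conditions are satisfied for $M=O\!\left(\Lambda+\log(1/\epsilon_{\rm cheb})\right)$, which is the claimed bound.

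The one point requiring care is the domain: the derivative hypothesis is quoted on a subinterval (the main theorem uses $[-1,1]$), whereas the remainder formula and the bound $2^{1-M}$ on the node polynomial are tied to $[-1,1]$. If the bound is only available on a proper subinterval one must first rescale the Chebyshev grid to that subinterval, which changes $\Lambda$ by at most a constant factor and leaves the asymptotic node count unchanged. One should also recall that in the full pipeline the samples $(f)_{k}$ are the noisy estimates of \cref{cor:PIE_subnorm} rather than exact values of $\Psi$; because $\mathbf{V}$ is orthogonal that sampling error is transmitted without amplification and is budgeted separately, so it does not enter $\epsilon_{\rm cheb}$ here.
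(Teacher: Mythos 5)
Your proposal is correct and follows essentially the same route as the paper's proof in \appref{app:M}: the Chebyshev remainder bound with the $\tfrac{1}{2^{M-1}M!}$ prefactor, the Leibniz-rule estimate $\left|(\psi^2)^{(M-1)}\right|\le 2^{M-1}\Lambda^{M+1}$ (your bookkeeping here is in fact slightly more careful than the paper's, which drops a power of $\Lambda$, though this does not affect the asymptotics), and a Stirling estimate yielding $(\mathrm{e}\Lambda/M)^M$ decay. The only difference is the final inversion for $M$, where the paper invokes the Lambert-$W$ function and you use an elementary two-condition splitting; both give $M=O\left(\Lambda+\log(1/\epsilon_{\rm cheb})\right)$.
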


For a full derivation of these bounds see \appref{app:M}. With these results we have elucidated the path to obtain an interpolant of the integral $\Psi(x) = \int^{x}_{-1} \psi^2(x') {\rm d} x'$, now, to obtain the integrand one must just differentiate $\Psi(x)$ with respect to $x$. In what is next, we show the results for uncertainty propagation after differentiation and due to perturbed node collocation (we can't exactly sample the integral at Chebyshev nodes).

\subsection{Extracting the integrand $\psi^2$\label{subsec:integrand}}

Here, we explain how we obtain an approximation of the integrand $\psi^2$ out of the interpolant for $\Psi(x)$ (See \Cref{eq:Phi}). We also list the lemma that tells us how the errors, $\epsilon_{\Psi}$, from the estimated $\Psi(x)$ propagate towards this approximant of $\psi^2$ with error $\epsilon_{\psi^2}$ (for the full proof of \Cref{lem:derivative} see \appref{app:derivative}). 

First, we note that are actually going to obtain the coefficients from
\begin{align}
    \mathbf{V}_{\rm pert} a' = f',
\end{align}
where $f'$ is the integral samples with errors introduced from the finite-resources amplitude estimation and from using perturbed node collocations, $x_{{\rm mock},k}$. $\mathbf{V}_{\rm pert}$ is the corresponding slightly perturbed Chebyshev-Vandermonde matrix:
\begin{align}
&\mathbf{V}_{\rm pert} := \cr 
&\begin{pmatrix}
 u_0(x_{{\rm mock},1})   & u_1(x_{{\rm mock},1})   & \dots  & u_{M-1}(x_{{\rm mock},1}) \\
 u_0(x_{{\rm mock},2})   & u_1(x_{{\rm mock},2})   & \dots  & u_{M-1}(x_{{\rm mock},2}) \\
 \vdots     & \vdots     & \ddots & \vdots  \\
 u_0(x_{{\rm mock},M}) & u_1(x_{{\rm mock},M}) & \dots  & u_{M-1}(x_{{\rm mock},M})
\end{pmatrix}.\cr 
\end{align}
where the perturbation comes from obtaining the amplitudes at slightly-off collocation nodes, $x_{{\rm mock},k}$. This comes from using a finite size grid, even though exponential in $n$.

The error propagation from the integral estimation towards the integrand estimation can be summarized in the following lemma:
\begin{lemma}[Error Propagation to Derivative of Interpolant] \label{lem:derivative}
Provided we estimate $\Psi(x)$ through $\tilde{\Psi}(x)$ at the perturbed nodes $x_{{\rm mock},k}$ closest to the Chebyshev nodes
\begin{align*} 
    x_{{\rm cheb},k} = \cos\left(\frac{2k-1}{2M} \pi\right), \quad k \in \{1,2,\dots,M\},
\end{align*}
where $x_{{\rm mock},k} \in \{2(j/2^n)-1\,|\, j \in \mathbb{N} \,|\, j < 2^n\}$ with error $|\tilde{\Psi}(x_{{\rm mock},j}) - \Psi(x_{{\rm mock},j})|\leq \epsilon_{\Psi}$, and the truncation error from Chebyshev interpolation is $\left| \Psi(x) - P_{M-1} \Psi(x) \right|\leq \epsilon_{\rm cheb} \sim \epsilon_{\Psi}$, we can obtain an approximation of its derivative with
\begin{align*}
 \tilde{\psi}(x) = \sum^{M-1}_{j=0} j a'_j U_j(x) 
\end{align*}
within an error 
\begin{align}
O(M^2 \epsilon_{\Psi} ),
\end{align}
where $a'$ are the coefficients that solve the following perturbed Chebyshev-Vandermonde linear system:
\begin{align*}
\mathbf{V}_{\rm pert} a' = f',
\end{align*}
where $f_j = \tilde{\Psi} (x_{{\rm mock},j})$, and
\begin{align*}
&\mathbf{V}_{\rm pert} =\cr 
&\begin{pmatrix}
 u_0(x_{{\rm mock},1})   & u_1(x_{{\rm mock},1})   & \dots  & u_{M-1}(x_{{\rm mock},1}) \\
 u_0(x_{{\rm mock},2})   & u_1(x_{{\rm mock},2})   & \dots  & u_{M-1}(x_{{\rm mock},2}) \\
 \vdots     & \vdots     & \ddots & \vdots  \\
 u_0(x_{{\rm mock},M}) & u_1(x_{{\rm mock},M}) & \dots  & u_{M-1}(x_{{\rm mock},M})
\end{pmatrix}.
\end{align*}
and
\begin{align*} \label{eq:cheb_orthonorm_def}
     u_j(x) :=
    \begin{cases}
    \sqrt{\frac1M}T_0(x), &j=0 \\
    \sqrt{\frac2M}T_j(x), &j=1,2,\dots,M-1 \\
    \end{cases} .
\end{align*}
All of this, provided the perturbation on the nodes $|x_{{\rm cheb},j}- x_{{\rm mock},j}|= O(1/2^n)$ is small enough such that $\| \mathbf{V}_{\rm pert} ^{-1}\| = O(1)$, and $|\Psi(x_{{\rm cheb},j}) - \Psi(x_{{\rm mock},j})|= O(\Lambda^2/2^n) \sim \epsilon_{\Psi}$.
\end{lemma}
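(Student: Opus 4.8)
\noindent\emph{Proof idea.} The plan is to split $\tilde\psi(x) - \psi^2(x)$ (note $\psi^2 = \Psi'$) into the error from differentiating the \emph{truncation} of the Chebyshev series and the error from the \emph{perturbed} collocation data, and to turn perturbations of the interpolating polynomial into perturbations of its derivative via the Markov brothers' inequality $\|p'\|_{\infty,[-1,1]} \le (\deg p)^2\,\|p\|_{\infty,[-1,1]}$. Writing $\tilde p(x) := \sum_{j=0}^{M-1} a'_j u_j(x)$ for the polynomial actually produced (so $\tilde\psi = \tilde p\,'$) and $P_{M-1}\Psi$ for the exact degree-$(M-1)$ Chebyshev interpolant through the true nodes $x_{{\rm cheb},k}$, I would bound
\begin{align*}
\|\tilde\psi - \psi^2\|_\infty \;\le\; \underbrace{(M-1)^2\,\|\tilde p - P_{M-1}\Psi\|_\infty}_{\text{(A), by Markov}} \;+\; \underbrace{\|(P_{M-1}\Psi)' - \Psi'\|_\infty}_{\text{(B), truncation}} ,
\end{align*}
so that everything reduces to (i) a sup-norm bound on the interpolant perturbation $\|\tilde p - P_{M-1}\Psi\|_\infty$ and (ii) the truncation-derivative term (B).

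For (B): I would use that $\psi^2 = \Psi'$ inherits analytic bounds from the hypothesis on $\psi$, namely $|(\psi^2)^{(k)}| \le \sum_i \binom{k}{i}\Lambda^{i+1}\Lambda^{k-i+1} = 2^k\Lambda^{k+2} \le (2\Lambda)^{k+2}$, i.e.\ $\Psi^{(m)}$ grows at the same rate that already drove \Cref{lem:M}; hence the Chebyshev coefficients $a_j$ of $\Psi$ decay --- geometrically once $M \gtrsim \Lambda$ --- and with $M = O(\Lambda + \log(1/\epsilon_{\rm cheb}))$ one has $\sum_{j\ge M}|a_j| \le \|\Psi - P_{M-1}\Psi\|_\infty \le \epsilon_{\rm cheb}$. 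Differentiating the tail and using $\|T_j'\|_{\infty,[-1,1]} = j^2$ gives $\text{(B)} \le \sum_{j\ge M} j^2 |a_j| = O(M^2\,\epsilon_{\rm cheb}) = O(M^2\,\epsilon_\Psi)$, using the hypothesis $\epsilon_{\rm cheb}\sim\epsilon_\Psi$; equivalently, apply Markov to the difference between $P_{M-1}\Psi$ and a higher-degree truncation.

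For $\|\tilde p - P_{M-1}\Psi\|_\infty$ I would track three ingredients. First, the true-node problem is perfectly conditioned: $\mathbf{V}$ has orthonormal columns, so $\|\mathbf{V}^{-1}\| = 1$, and the Lebesgue constant of Chebyshev interpolation is $O(\log M)$. Second, the componentwise data error is $O(\epsilon_\Psi)$: $|\tilde\Psi(x_{{\rm mock},k}) - \Psi(x_{{\rm cheb},k})| \le |\tilde\Psi(x_{{\rm mock},k}) - \Psi(x_{{\rm mock},k})| + |\Psi(x_{{\rm mock},k}) - \Psi(x_{{\rm cheb},k})| \le \epsilon_\Psi + O(\Lambda^2/2^n)$, where the middle step uses $\|\Psi'\|_\infty = \|\psi^2\|_\infty \le \Lambda^2$ together with $|x_{{\rm cheb},k} - x_{{\rm mock},k}| = O(1/2^n)$, and $\Lambda^2/2^n \sim \epsilon_\Psi$ by hypothesis. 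Third, the matrix perturbation is negligible: $|u_j(x_{{\rm mock},k}) - u_j(x_{{\rm cheb},k})| \le \sqrt{2/M}\,j^2\,|x_{{\rm cheb},k} - x_{{\rm mock},k}| = O(M^{3/2}/2^n)$, so $\|\mathbf{V}_{\rm pert} - \mathbf{V}\| = O(M^{5/2}/2^n)$, whence $\|\mathbf{V}_{\rm pert}^{-1}\| \le (1 - \|\mathbf{V}_{\rm pert} - \mathbf{V}\|)^{-1} = O(1)$ once $2^n \gtrsim \mathrm{poly}(M)$ --- which is exactly the ``perturbation small enough'' clause of the hypothesis. Standard linear-system perturbation theory then gives $\|a' - a\| \le \|\mathbf{V}_{\rm pert}^{-1}\|\big(\|f' - f\| + \|\mathbf{V}_{\rm pert} - \mathbf{V}\|\,\|a\|\big) = \tilde O(\epsilon_\Psi)$ (using $\|a\| = \|\mathbf{V}^{-1}f\| = \|f\| \le \sqrt M$ since $0 \le \Psi \le 1$), and feeding this through the Lebesgue constant yields $\|\tilde p - P_{M-1}\Psi\|_\infty = \tilde O(\epsilon_\Psi)$. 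Multiplying by the Markov factor $(M-1)^2$ of term (A) and adding (B) gives $\|\tilde\psi - \psi^2\|_\infty = \tilde O(M^2\,\epsilon_\Psi)$, as claimed.

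The step I expect to be the main obstacle is the uniform-in-$M$ control of the perturbed collocation --- establishing that shifting the nodes by $O(1/2^n)$ keeps $\|\mathbf{V}_{\rm pert}^{-1}\| = O(1)$ (equivalently, that $\mathbf{V}_{\rm pert}$ stays well conditioned). The naive entrywise estimate forces the large factor $j^2 \le M^2$ into $\|\mathbf{V}_{\rm pert} - \mathbf{V}\|$, so one must check that the exponential grid resolution $2^{-n}$ genuinely beats $\mathrm{poly}(M) = \mathrm{poly}(\Lambda, \log(1/\epsilon))$ for \emph{all} $j < M$ simultaneously; this is cheap but is the reason the hypotheses are phrased in terms of both $|x_{{\rm cheb},j} - x_{{\rm mock},j}|$ and $|\Psi(x_{{\rm cheb},j}) - \Psi(x_{{\rm mock},j})|$. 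A secondary point worth flagging is that differentiation necessarily injects the $M^2$ Markov factor --- which is precisely why the propagated error is $O(M^2\,\epsilon_\Psi)$ rather than $O(\epsilon_\Psi)$ --- so in the proof of \Cref{thm:main} the upstream target $\epsilon_\Psi$ from \Cref{cor:PIE_subnorm} and the sample count $M$ must be chosen in tandem to hit $\epsilon_{\rm total}$.
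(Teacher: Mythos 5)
Your proposal is correct and, at its core, runs on the same mechanism as the paper's proof in \cref{app:derivative}: the data error $O(\epsilon_{\Psi})$ at the mock nodes is pushed through $\|\mathbf{V}_{\rm pert}^{-1}\|=O(1)$ to a coefficient perturbation, and the factor $M^2$ enters by differentiating degree-$(M-1)$ Chebyshev polynomials, $\max_x|T_j'(x)|\le j^2$ --- your invocation of the Markov brothers' inequality is just the polynomial-level packaging of that same bound. The differences are nonetheless worth noting. First, you explicitly bound the term (B) $=\|(P_{M-1}\Psi)'-\Psi'\|_\infty$ via the decay of the Chebyshev coefficient tail; the paper sets this aside with the phrase ``ignoring the algorithmic error from using a finite size $M$ interpolation,'' even though the lemma's conclusion concerns the error relative to $\psi^2=\Psi'$ and not relative to $(P_{M-1}\Psi)'$, so your treatment closes a real gap (knowing $\|\Psi-P_{M-1}\Psi\|_\infty\le\epsilon_{\rm cheb}$ does not by itself bound the derivative of a non-polynomial remainder, so the coefficient-decay route is the right one). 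Second, your Lebesgue-constant argument for $\|\tilde p-P_{M-1}\Psi\|_\infty=\tilde O(\epsilon_\Psi)$ --- a degree-$(M-1)$ polynomial that is $O(\epsilon_\Psi+\epsilon_{\rm cheb})$ at all $M$ near-Chebyshev nodes is $O(\epsilon_\Psi\log M)$ everywhere --- is tighter bookkeeping than the paper's norm chain, which asserts $\|a-a'\|_1=\|\mathbf{V}_{\rm pert}^{-1}\|_2\|f-f'\|_2=O(\sqrt{M}\epsilon_\Psi)$; that step is not a valid identity, and carried out strictly (with $\|\cdot\|_1\le\sqrt{M}\|\cdot\|_2$) it yields $O(M^{5/2}\epsilon_\Psi)$ rather than the claimed $O(M^2\epsilon_\Psi)$. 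Your route recovers the stated $\tilde O(M^2\epsilon_\Psi)$ cleanly, so on this point the blind proof is actually sounder than the paper's. Your conditioning argument for $\mathbf{V}_{\rm pert}$ matches \cref{app:Vpert}.
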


With these results, we have with us access to the integrand $\psi^2(x)$ and estimates on the resources required to achieve a target error on $\psi^2(x)$. That is, for some  target error $\epsilon_{\psi^2}$ we need to sample the integral $\Psi(x)$ with a tolerance $O(\epsilon_{\psi^2}/M^2)$ , where $M$ is expected to be $O(\log{1/\epsilon_{\rm cheb}})$. Finally, to approximate $\psi(x)$ we need to take the square root of the approximant of $\psi^2(x)$.

\subsection{Taking the square root of the approximant of $\psi^2$\label{subsec:square_root}}

Here, we briefly explain how we are able to take the square root of the approximant of $\psi^2$ without introducing any discontinuities and large uncertainties. Fr that, we will make use of the following lemma:

\begin{lemma}
Given a positive definite function $f(x)$ with finite first derivative for $x\in[-1,1]$ which we have estimated through $\tilde{f}(x)$ within error $\varepsilon$, we can estimate $\sqrt{f(x)}$ through
\begin{align}
    \sqrt{\tilde{f}(x)}
\end{align}
which would have an error with respect to $\sqrt{f(x)}$ that goes like
\begin{align*}
 \left| \sqrt{f(x)} - \sqrt{\tilde{f}(x)} \right|  =  O\left(\frac{\varepsilon}{\min_{s}\left(\sqrt{f(x)}\right)}\right)
\end{align*}
for some $\varepsilon \leq \min_{s}f(x)$.
\end{lemma}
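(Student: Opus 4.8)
\emph{Proof proposal.} The plan is to reduce the claim to the elementary rationalization identity for square roots and a one-line lower bound on a denominator. Writing
\begin{align*}
  \sqrt{f(x)}-\sqrt{\tilde f(x)} = \frac{f(x)-\tilde f(x)}{\sqrt{f(x)}+\sqrt{\tilde f(x)}},
\end{align*}
the numerator is controlled directly by the hypothesis $|f(x)-\tilde f(x)|\le\varepsilon$, so the entire estimate comes down to bounding $\sqrt{f(x)}+\sqrt{\tilde f(x)}$ from below.

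First I would verify that $\sqrt{\tilde f(x)}$ is real and well defined: from $\tilde f(x)\ge f(x)-\varepsilon\ge \min_x f(x)-\varepsilon\ge 0$, which is exactly where the standing assumption $\varepsilon\le\min_x f(x)$ enters (if strict positivity in the interior is wanted one uses $\varepsilon<\min_x f$; the $O(\cdot)$ conclusion is insensitive to this). Then, discarding the nonnegative term $\sqrt{\tilde f(x)}$, I would use $\sqrt{f(x)}+\sqrt{\tilde f(x)}\ge\sqrt{f(x)}\ge\sqrt{\min_x f(x)}=\min_x\sqrt{f(x)}$, which immediately yields $\bigl|\sqrt{f(x)}-\sqrt{\tilde f(x)}\bigr|\le \varepsilon/\min_x\sqrt{f(x)}$, i.e.\ the stated $O\!\left(\varepsilon/\min_x\sqrt{f(x)}\right)$ bound, uniformly in $x\in[-1,1]$.

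The only place the ``finite first derivative'' hypothesis is needed is to justify the claim implicit in the surrounding text that this square-root step introduces no discontinuities or spurious large oscillations. Here I would note that $\tilde f$ (being, in the application, the derivative of a Chebyshev interpolant, hence a polynomial) is smooth and, by the bound above, bounded strictly below by a positive constant, so $x\mapsto\sqrt{\tilde f(x)}$ is smooth with $\frac{d}{dx}\sqrt{\tilde f}=\tilde f'/(2\sqrt{\tilde f})$, whose size is controlled by $\|\tilde f'\|_\infty$ and $1/\min_x\sqrt{f(x)}$; the analogous bound for $\sqrt{f}$ uses the finite first derivative of $f$, and together these show the derivative of the approximation error stays finite, so no discontinuity is created.

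The main obstacle is bookkeeping rather than a genuine difficulty. One must check that the hypothesis $\varepsilon\le\min_x f$ is actually satisfied when the lemma is applied to the output of \cref{lem:derivative} with $\tilde f=\tilde\psi^2$ and $f=\psi^2$ — i.e.\ that the integrand estimate is accurate to within the pointwise minimum of $\psi^2$ — and then track how the resulting factor $1/\min_x\sqrt{f}=1/\min_x\psi$ multiplies the $M^2\epsilon_\Psi$ derivative error and combines with the $1/a_\psi$, $n^2$, and $\Lambda^2$ factors from the earlier steps to reproduce the cost quoted in \cref{thm:main}. The pointwise square-root estimate itself is immediate from the two displays above.
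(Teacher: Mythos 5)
Your proof is correct, and it takes a genuinely different route from the paper's. The paper argues by error propagation through the derivative: it invokes the first-order bound $|g(x\pm\varepsilon)-g(x)|\le\varepsilon\max_y g'(y)$ with $g=\sqrt{\cdot}$ viewed as a function of $f$, so that $\frac{{\rm d}\sqrt{f}}{{\rm d}f}=\frac{1}{2\sqrt{f}}$ supplies the $1/\min_x\sqrt{f(x)}$ factor; the positivity condition $\varepsilon\le\min_x f(x)$ enters only to keep $\tilde f$ in the domain of the square root. You instead use the rationalization identity $\sqrt{f}-\sqrt{\tilde f}=(f-\tilde f)/(\sqrt{f}+\sqrt{\tilde f})$ and bound the denominator below by $\sqrt{f(x)}\ge\min_x\sqrt{f(x)}$ after discarding $\sqrt{\tilde f}\ge 0$. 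Your version is arguably cleaner: it yields the explicit non-asymptotic bound $\varepsilon/\min_x\sqrt{f(x)}$ with constant $1$, and it sidesteps a small soft spot in the paper's argument, namely that the mean-value bound really requires controlling $\max_{y\in[f-\varepsilon,f+\varepsilon]}\frac{1}{2\sqrt{y}}=\frac{1}{2\sqrt{f-\varepsilon}}$, which is only $O(1/\sqrt{\min_x f})$ if $\varepsilon$ is bounded away from $\min_x f$ by a constant fraction — a caveat the paper glosses over with the $O(\cdot)$. What the paper's derivative-based template buys in exchange is generality (it is the same error-propagation pattern used elsewhere in the manuscript and applies to functions other than the square root). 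Your closing remarks about smoothness and about verifying $\varepsilon_{\psi^2}\le\min_x\psi^2(x)$ in the downstream application are sensible bookkeeping but are not needed for the lemma itself.
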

\begin{proof}

We start by recalling the following bound for an analytic function $g(x)$:
\begin{align*}
       \left| g(x\pm \varepsilon) - g(x) \right| \leq  \varepsilon \max_{y\in [x-\varepsilon,x+\varepsilon]} \left( g'(y) \right).
\end{align*}
In order to use this identity, $\tilde{f}(x)$ must also be positive definite, thus we assume that $\varepsilon \leq \min_s f(x)$ such that the estimate $\tilde{f}(x)$ remains positive definite. Now, we look at the first derivative of $\sqrt{f(x)}$: $\frac{{\rm d} \sqrt{f(x)}}{{\rm d} f} = \frac{1}{2 \sqrt{f}}$.
With this, we know the error propagated to $\sqrt{\tilde{f}(x)}$ is 
\begin{align}
 \left| \sqrt{f(x)} - \sqrt{\tilde{f}(x)} \right|  =  O\left(\frac{\varepsilon}{\min_{s}\left(\sqrt{f(x)}\right)}\right)
\end{align}
for some $\varepsilon<\min_{s}f(x)$.
\end{proof}

Using this lemma, we find that the error propagated from taking the square root of the approximation of $\psi^2$, $2\sum_j j a'_j U_j(x)$, is $\epsilon_{\psi} = O\left(\frac{\epsilon_{\psi^2}}{\min_{x} \psi(x)} \right)$ provided that $\epsilon_{\psi^2} \leq \min_s \psi(x)$. Now, using the results from \cref{lem:derivative}, we know that at the same time the error from our estimates of $\Psi(x_{{\rm mock},k})$ propagate to $\psi(x)$ the following way: $\epsilon_{\psi} = O\left(\frac{\epsilon_{\psi^2}}{\min_{x} \psi(x)} \right) =\tilde{O}\left( \frac{M^2 \epsilon_{\Psi}}{ \min_{x} \psi (x)}\right)$.

Putting all of these results together with the quantum cost estimates from  \cref{lem:PIE,cor:PIE_subnorm}, we obtain that we can extract $\psi(x)$ from memory (a unitary $U_{\psi}$, more specifically) through the following cost $O\left(\frac{1}{a_{\psi}}\frac{M^2 n^2 }{\epsilon_{\rm total}} \frac{\max_{x} \psi(x) }{\min_x \psi(x)} \right)$.

Finally, we have that $M = O\left(\Lambda + \log(1/\epsilon_{\rm cheb})\right)$ from \cref{lem:M}. Now, if we choose $\epsilon_{\rm cheb} \sim \epsilon_{\rm total}$ we have: $\tilde{O}\left(\frac{1}{a_{\psi}}\frac{\Lambda^2 n^2 }{\epsilon_{\rm total}} \frac{\max_{x} \psi(x) }{\min_x \psi(x)} \right)$.
With this, we have proved \cref{thm:main}.

\section{Conclusion}

We have provided here a method to extract solutions from quantum memory with an exponential improvement over naive quantum amplitude estimation. The constraints are that $\psi(x)$ must be positive definite, analytic, and $\min_x\psi(x)$ is not very small. This type of solutions one straightforwardly encounters in diffusive differential equations like those encountered in option pricing and the heat equation.

In the future, we would like to extend this analysis to problems where there are zero crossings present, shock or other singularities and in more general setups.

\section{Acknowledgements}

 I would like to thank Dr. Sarvagya Upadhyay of Fujitsu Research of America, Inc., for his valuable feedback on the manuscript. I am also deeply grateful to Dr. Hirotaka Oshima and Dr. Yasuhiro Endo of Fujitsu Research of Japan, Inc., for their insightful comments and suggestions throughout the course of the project. Lastly, I would like to thank \v{S}t\v{e}p\'an \v{S}m\'\i d for his useful comments on the final draft. 

\ifprlmode

\else
  \appendix
\section{Perturbation on linear system matrix, $\mathbf{V}$}
\label{app:Vpert} 

Regarding the perturbed matrix $\mathbf{V}_{\rm pert}$, we assumed that the norm of its inverse was $\|\mathbf{V}_{\rm pert}^{-1}\|_2=O(1)$. The orginal Chebyshev-Vandermonde, $\mathbf{V}_{\rm cheb}$, matrix has all eigenvalues $\lambda_j=1$, hence the stability. However, we are limited to collocating the sampling nodes at equi-spaced distance. To get an idea on why we can assume $\|\mathbf{V}_{\rm pert}^{-1}\|_2=O(1)$, we first consider the upper bounds on eigen-value perturbation\cite{bauer1960norms}:
\begin{align}
    \left|\lambda_{j}-\lambda\right| \leq \| \mathbf{V}-\mathbf{V}_{\rm pert} \|_2.
\end{align}
Now, in order to estimate the size of the spectral perturbation on the matrix, we look at the perturbation on each of its elements:
\begin{align}
   \| \mathbf{V}-\mathbf{V}_{\rm pert} \|_{\rm max} &= \max_{k,j\leq M-1} \left| u_j(x_{{\rm cheb},k})- u_j(x_{{\rm mock},k})\right|\cr 
   &\leq |x_{{\rm cheb},k}-x_{{\rm mock},k}| \max_{x\in[-1,1]} |D_x u_j(x)|.
\end{align}Thus, the error on the collocation points is:
\begin{align}
    \left| x_{{\rm cheb},k} - x_{{\rm mock},k} \right| = O(1/N).
\end{align}
Moreover, we have already established that
\begin{align}
    \max_s\left| \frac{{\rm d} T_j (x)}{{\rm d} x} \right| = O(j^2).
\end{align}
This means the maximum norm on the perturbation is:
\begin{align}
    \|\mathbf{V}-\mathbf{V}_{\rm pert} \|_{\rm max} = O\left(\frac{M^{3/2}}{N}\right).
\end{align}
Thus, the corresponding spectral norm is:
\begin{align}
   \|\mathbf{V}-\mathbf{V}_{\rm pert} \|_{2}&\leq M \|\mathbf{V}-\mathbf{V}_{\rm pert} \|_{\rm max} \cr 
   &= O\left(\frac{M^{5/2}}{N}\right).
\end{align}
We expect this to be very small since both $N=2^n$  grow exponentially with the number of qubits. For that reason, it is safe to assume, for each variable, that the perturbation is small and thus $\|\mathbf{V}_{\rm pert}\|_2=O(1)$. These findings, summarized in a lemma read:

\begin{lemma}
    A Chebyshev-Vandermonde $M\times M$ matrix, constructed with $M$ nodes $s'_j$,  which are the closest points to the Chebyshev nodes $s_j$, and which also can be placed on an equispaced grid on $x\in [-1,1]$ with $N$ points, where $1-M^{5/2}/N= \Omega (1)$, has
\begin{align*}
\| V_{\rm pert}^{-1} \| = O(1).
\end{align*}
\end{lemma}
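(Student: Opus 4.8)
The plan is to realize $\mathbf{V}_{\rm pert}$ as a small additive perturbation of the exact Chebyshev--Vandermonde matrix $\mathbf{V}_{\rm cheb}$ and to lower-bound the smallest singular value of the perturbed matrix. The input fact is the discrete orthonormality relation $\sum_{k=1}^{M} u_i(x_{{\rm cheb},k})\,u_j(x_{{\rm cheb},k}) = \delta_{ij}$, which is exactly the statement $\mathbf{V}_{\rm cheb}^{\mathsf T}\mathbf{V}_{\rm cheb} = I$; since $\mathbf{V}_{\rm cheb}$ is square this makes it an orthogonal matrix, so every singular value equals $1$ and in particular $\sigma_{\min}(\mathbf{V}_{\rm cheb}) = 1$, i.e.\ $\|\mathbf{V}_{\rm cheb}^{-1}\|_2 = 1$.

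Next I would use the singular-value perturbation inequality $|\sigma_j(\mathbf{V}_{\rm pert}) - \sigma_j(\mathbf{V}_{\rm cheb})| \le \|\mathbf{V}_{\rm pert} - \mathbf{V}_{\rm cheb}\|_2$ for all $j$, the correct analogue for the non-symmetric matrix here of the Bauer--Fike-type bound quoted earlier, which follows from applying Weyl's eigenvalue inequality to the Hermitian dilation $\left(\begin{smallmatrix}0 & A\\ A^{\mathsf T} & 0\end{smallmatrix}\right)$ with $A = \mathbf{V}_{\rm pert} - \mathbf{V}_{\rm cheb}$. Taking $j$ at the smallest singular value gives $\sigma_{\min}(\mathbf{V}_{\rm pert}) \ge 1 - \|\mathbf{V}_{\rm pert} - \mathbf{V}_{\rm cheb}\|_2$, and since $\|\mathbf{V}_{\rm pert}^{-1}\|_2 = 1/\sigma_{\min}(\mathbf{V}_{\rm pert})$ it is enough to show the right-hand side is bounded away from zero.

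The remaining piece is the entrywise estimate already carried out in this appendix: for each $j,k$ one has $|u_j(x_{{\rm cheb},k}) - u_j(x_{{\rm mock},k})| \le |x_{{\rm cheb},k} - x_{{\rm mock},k}|\,\max_{x\in[-1,1]}|u_j'(x)|$, the node displacement is $O(1/N)$ since $x_{{\rm mock},k}$ is the closest point of the $N$-point equispaced grid to the Chebyshev node, and the Markov brothers' inequality gives $\max_{x\in[-1,1]}|T_j'(x)| = j^2$, so with the normalization $u_j = \sqrt{2/M}\,T_j$ and $j \le M-1$ one gets $\max_x|u_j'(x)| = O(M^{3/2})$. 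Hence $\|\mathbf{V}_{\rm pert} - \mathbf{V}_{\rm cheb}\|_{\max} = O(M^{3/2}/N)$, and $\|\cdot\|_2 \le M\|\cdot\|_{\max}$ for an $M\times M$ matrix gives $\|\mathbf{V}_{\rm pert} - \mathbf{V}_{\rm cheb}\|_2 = O(M^{5/2}/N)$. Therefore $\sigma_{\min}(\mathbf{V}_{\rm pert}) \ge 1 - O(M^{5/2}/N)$, which is $\Omega(1)$ precisely under the hypothesis $1 - M^{5/2}/N = \Omega(1)$, and so $\|\mathbf{V}_{\rm pert}^{-1}\|_2 = O(1)$.

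The main thing to be careful about is not anything deep but the two norm-conversion steps that fix the exponent: one must measure the perturbation in the spectral norm, so the symmetric-matrix Bauer--Fike statement has to be upgraded to the singular-value version that also applies to the non-symmetric $\mathbf{V}_{\rm pert}$, and the crude $\|\cdot\|_2 \le M\|\cdot\|_{\max}$ bound combined with the $O(M^{3/2})$ bound on $|u_j'|$ is exactly what produces the $M^{5/2}/N$ threshold appearing in the statement. Since $N = 2^n$ is exponentially large while $M$ is only of order $\Lambda + \log(1/\epsilon_{\rm cheb})$, this threshold holds with a huge margin in the regime we care about, so the $O(1)$ conclusion is robust.
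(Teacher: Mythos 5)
Your proposal is correct and follows essentially the same route as the paper: the same entrywise bound $|u_j(x_{{\rm cheb},k})-u_j(x_{{\rm mock},k})| \le |x_{{\rm cheb},k}-x_{{\rm mock},k}|\max_x|u_j'(x)|$ with node displacement $O(1/N)$ and $\max_x|T_j'(x)| = j^2$, the same $\|\cdot\|_2 \le M\|\cdot\|_{\max}$ conversion yielding the $O(M^{5/2}/N)$ spectral perturbation, and the same conclusion that this is small under the stated hypothesis. The one genuine difference is the perturbation theorem you invoke, and your choice is actually the more rigorous one. The paper asserts that $\mathbf{V}_{\rm cheb}$ ``has all eigenvalues $\lambda_j=1$'' and then applies a Bauer--Fike eigenvalue bound; but the discrete orthonormality relation gives $\mathbf{V}_{\rm cheb}^{\mathsf T}\mathbf{V}_{\rm cheb}=I$, i.e.\ $\mathbf{V}_{\rm cheb}$ is orthogonal, so its \emph{singular values} are all $1$ while its eigenvalues need only have unit modulus. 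More importantly, $\|\mathbf{V}_{\rm pert}^{-1}\|_2 = 1/\sigma_{\min}(\mathbf{V}_{\rm pert})$, and for the non-normal matrix $\mathbf{V}_{\rm pert}$ an eigenvalue perturbation bound does not directly control $\sigma_{\min}$. Your replacement of Bauer--Fike by Weyl's inequality for singular values (via the Hermitian dilation) closes this small gap and delivers exactly the bound $\sigma_{\min}(\mathbf{V}_{\rm pert}) \ge 1 - O(M^{5/2}/N) = \Omega(1)$ that the lemma's hypothesis is designed to guarantee. In short: same skeleton, but your version states the perturbation step in the form that actually proves the claim.
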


\section{Proof of Chebyshev Integral Interpolation Lemma}
\label{app:M}

\begin{proof}

    With this we can now use the convergence rates for Chebyshev interpolation. 
\begin{align}
    \epsilon_{\rm cheb} =\left| \Psi (x) - P_{M-1} \Psi(x) \right| \leq \frac{1}{2^{M-1} M! } \max_s  \left| \frac{{\rm d}^M\Psi(x)}{{\rm d} x^M} \right|
\end{align}
First, we note that
\begin{align}
    \frac{{\rm d} \Psi (x) }{{\rm d} x } = \psi^2.
\end{align}
Also, by Leibniz' general rule:
\begin{align}
    \frac{d^j \psi^2}{ d x^j} = \sum \binom{j}{k} \psi^{(j-k)}(x) \psi^{(k)} (x).
\end{align}
Thus, a bound is
\begin{align}
    \left| \frac{d^j \psi^2}{ d x^j} \right| \leq  (\Lambda)^{j+1} \sum \binom{j}{k} = (\Lambda)^{j+1} 2^j =  (2\Lambda)^{j+1}/2
\end{align}
Thus,
\begin{align}
    \epsilon_{\rm cheb} =\left| \Psi (x) - P_{M-1} \Psi(x) \right| &\leq \frac{1}{2^{M-1} M! } \max_s  \left| \frac{{\rm d}^{M-1}\psi^2(x)}{{\rm d} x^{M-1} } \right| \cr 
     &\leq \frac{1}{ M! }   \left(\Lambda\right)^M.
\end{align}
Using Stirling bounds for $M!$, for $M\geq 1$, we find that
\begin{align}
    \epsilon_{\rm cheb} =\left| \Psi (x) - P_{M-1} \Psi(x) \right| &\leq \frac{1}{  \sqrt{2\pi M} (M/e)^M }   \left(\Lambda \right)^M \cr
    &\leq \frac{1}{   (M/e)^M }   \left(\Lambda \right)^M \cr
    &=  \exp\left( - M \left(\log M   - 1 - \log \Lambda \right)\right)
\end{align}
We can write this inequality the following way:
\begin{align}
   \frac{ M}{\Lambda e} \log \left( \frac{M}{\Lambda e}\right) \leq \frac{ \log \frac{1}{\epsilon_{\rm cheb}} }{\Lambda e} 
\end{align}
We note that this has the following solution:
\begin{align}
    M \leq {\Lambda e}\exp\left(W_0\left(\frac{\log(1/\epsilon_{\rm cheb})}{\Lambda e}\right)\right)
\end{align}
where $W_0$ stands for the Lambert-W function. We use the following bound
\begin{align}
    W_0(x) &\leq \log \left(\frac{2x+1}{1+\log(x+1)}\right)
\end{align}
For $x>0$ we can use simpler looser bound:
\begin{align}
    W_0(x) &\leq \log \left(2x+1\right).
\end{align}
Using this last bound we obtain:
\begin{align}
    M &\leq {\Lambda e}\exp\left(\log\left(\frac{2\log(1/\epsilon_{\rm cheb})}{\Lambda e}+1\right) \right)\cr
    &=  {\Lambda e} + 2\log(1/\epsilon_{\rm cheb})
\end{align}
So the number of nodes $M$ corresponding to a target error $\epsilon_{\rm cheb}$ is bounded through the inequality above. That means, that to ensure we have an error lower than a target error of $\epsilon_{\rm cheb}$ we must choose an $M \geq {\Lambda e} + \log(1/\epsilon_{\rm cheb})$. In any case, we must choose
\begin{align}
    M = O \left(\Lambda + \log (1/\epsilon_{\rm cheb} )\right).
\end{align}

\end{proof}

\section{Proof of Lemma on the Probability Integral Estimation}
\label{app:PIE}

\begin{proof}
We can estimate the integral with the following left-Riemann sum:
\begin{align}
    \Psi(\hat{x}) =  \frac{1}{2^{n-1}}\sum^{X-1}_{j=0} \psi^2(x_j) + O\left(\frac{\Lambda^2}{2^n}\right).
\end{align}
We can further estimate this sum with the re-normalized one
\begin{align}
    \Psi(\hat{x}) =  \frac{1}{\mathcal{N}}\sum^{X-1}_{j=0} \psi^2(x_j) + O\left(\frac{\Lambda^2}{2^n}\right),
\end{align}
because $\mathcal{N}/2^{n-1}=1+O(\Lambda^2/2^n)$. This last statement can be corroborated by thinking of $\mathcal{N}/2^n$ as the left-Riemann sum approximation of $\int^1_0 \psi^2(x') {\rm d} x'=1$ through:
\begin{align*}
    \mathcal{N}/2^{n-1} = \frac{1}{2^{n-1} }\sum^{2^n-1}_{j=0} \psi^2(x_j) = 1 + O(\Lambda^2/2^n).
\end{align*}
Here, we are consider $\hat{x}= 2 X / 2^n - 1$, for which we can always find the following binary decomposition:
\begin{align}
    X = X_0 + 2X_1 + \dots + 2^{m-1}X_{m-1},
\end{align}
where $X_j\in \{ 0,1\}$ and we truncate this expansion so that $X_{m-1}=1$.  Thus, the procedure for estimating this sum of probabilities is:

\begin{algorithm}[H]
\caption{Shift-Based Probability Estimation}
\SetAlgoLined
\DontPrintSemicolon
\KwInput{Binary decomposition $X = \sum_j X_j 2^j$, samples of $\psi(x)$ stored in quantum amplitudes}
\KwOutput{Estimate of $\displaystyle \frac{1}{\mathcal{N}} \sum_{j=0}^{X-1} \psi^2\left(x_j\right)$}

$W \gets 0$\tcp*{Running index to track cumulative shift}

\For{$p \in \{m{-}1, m{-}2, \dots, 0\}$}{
  \If{$X_p = 1$}{
    Estimate the quantity:
    \[
    q_p =\sqrt{\frac{1}{\mathcal{N}} \sum_{k=W}^{W+2^p - 1} \psi^2\left( x_k \right)}
    \]
    with precision $\varepsilon_p$ by measuring the probability (or its square root for QAE) of all zeros from bit $p$ to the right,\;
    after shifting the register by $w$ using $S_{-W}$:\;
    \[
    S_{-W} |\psi\rangle = \frac{1}{\sqrt{\mathcal{N}}} \sum_{k=0} \psi(x_k + W/2^n)\,|k\rangle
    \]
    $w \gets W + 2^p$\;
  }
}
\end{algorithm}

Thus, we only need to do at most $m \leq n$ QAE routines with target error $\epsilon$ to obtain an estimate of 
\begin{align}
   \sum^{m-1}_{p=0} q_p^2 = \frac{1}{\mathcal{N}}\sum^{X-1}_{j=0} \psi^2(x_j).
\end{align}
The error on each estimate $q^2_p$ is bounded by:
\begin{align}
    \varepsilon_p^2 + 2 q_p \varepsilon_p.
\end{align}
Moreover, we have the following bound:
\begin{align}
    q_p^2 \leq \sum_p q_p^2 \sim \Psi(\hat{x}) \leq \left(1+|\hat{x}|\right) \max_{x\leq \hat{x}} \psi^2(x).
\end{align}
With this, and the fact that $(1+|\hat{x}|)\leq 2$, we know that $q_p \lesssim  \sqrt{2}\max_{x\leq \hat{x}} \psi(x) $, which means the error on each estimate of $q_p^2$ is
\begin{align}
    O\left(\varepsilon_p^2 + \varepsilon_p  \max_{x} \psi(x)\right).
\end{align}
From now on, we assume that all $\varepsilon_p=\varepsilon$. By repeated triangle inequality, we have that the total error on our estimate of $\sum_p q_p^2$ is
\begin{align}
    O\left(m \left(\varepsilon^2 + \varepsilon  \max_{x} \psi(x)\right)\right).
\end{align}
Using the fact that $m\leq n$ we bound the error through the more convenient expression:
\begin{align}
    O\left(n \left(\varepsilon^2 + \varepsilon  \max_{x} \psi(x)\right)\right).
\end{align}
Finally, assuming that $\varepsilon < \max_{x} \psi(x)$ we have:
\begin{align}
    O\left(n \varepsilon \max_x \psi(x) \right).
\end{align}
Thus, if we want a target error $\epsilon_{\Psi}$ on the total integral, using QAE we need to target an error $\varepsilon_p = \varepsilon = \epsilon_{\Psi}/ (n \max_x \psi(x) )$ for each of the $q_p$. The cost of each QAE for every $p$ run scales like $O(n \max_x \psi(x) /\epsilon_{\Psi})$. This gives a total gate cost that is going to scale like:
\begin{align}
    O\left(\frac{n^2 \max_x \psi(x) }{\epsilon_{\Psi}}\right),
\end{align}
provided $\varepsilon = \epsilon_{\Psi} / (n \max_x \psi(x) ) \leq \max_x \psi(x)$.

\end{proof}

\section{Proof of Lemma on Error Propagation to the Derivative of Interpolant}
\label{app:derivative}

First, we look at the error on the vector of amplitudes coming from amplitude estimation and perturbed collocation nodes
\begin{align}
    \left| (f- f')_j \right| =O(  \epsilon_{\Psi} + {\Lambda^2}|x_{{\rm cheb},j}- x_{{\rm mock},j}|) ,
\end{align}
where $\epsilon_{\Psi}$ is the target error with which we estimate each $\Psi(s'_j)$ and the term $\Lambda^2|x_{{\rm cheb},j}-x_{{\rm mock},j}|$ is coming from the perturbation of the nodes. We now assume that $\Lambda^2 |x_{{\rm cheb},j}-x_{{\rm mock},j}| = O(\epsilon_{\Psi})$, which is a loose assumption since $|x_{{\rm cheb},j}-x_{{\rm mock},j}|=O(1/2^n)$, an exponentially suppressed quantity. With this, we look at how the error is propagated to the expansion coefficients:
\begin{align}
    \| a - a' \|_1 = \| V_{\rm perturbed}^{-1} (f-f') \|_1
\end{align}
Using Schwarz inequality
\begin{align}
    \| a - a' \|_1 &= \| V_{\rm perturbed}^{-1} (f-f') \|_1\cr 
    & =\| V_{\rm perturbed}^{-1} \|_2 \| (f-f') \|_2 \cr
    &= O\left(\sqrt{M} \epsilon_{\Psi}\right)
\end{align}
Here, we have assumed $\| V_{\rm perturbed}^{-1}\|_2=O(1)$ (See \cref{app:Vpert} for proof).
Ignoring the algorithmic error from using a finite size $M$ interpolation, we know that the propagated uncertainty is upper bounded by:
\begin{align}
    \|(a-a') D_x u(x)\|_1.
\end{align}
Here, 
\begin{align}
(D_x u(x))_j = \frac{{\rm d}(u(x))_j}{{\rm d} x}=\begin{cases}
    \sqrt{\frac{\pi}{M}} \frac{{\rm d}T_j(x)}{{\rm d} x} \quad j=0 \\
    \sqrt{\frac{\pi}{M}}\frac{{\rm d}T_j(x)}{{\rm d} x} \quad j \in \{1,\dots,M-1\}
\end{cases}. 
\end{align}
Moreover, we note that the derivative of $T_j(x)$ is
\begin{align}
    \frac{{\rm d} T_j(x)}{ {\rm d} x} = j U_{j-1}(x),
\end{align}
where $U_j(x)$ are the Chebyshev polynomials of second kind:
\begin{align}
    U_j ( x) = \frac{\sin\left((n+1) \arccos(x) \right)}{\sin(\arccos(x))}.
\end{align}
We also note that
\begin{align}
   \max_x |U_j (x)| \leq j+1,
\end{align}
which in turn means that
\begin{align}
   \max_{x} \left| \frac{{\rm d} T_j (x)}{{\rm d} x} \right| \leq j^2.
\end{align}
Thus, after differentiating the interpolant, we expect to have an error
\begin{align}
   \epsilon_{\psi^2} =  O\left(M^{2}\epsilon_{\Psi}\right).
\end{align}

\fi

\end{document}